\def \polylog{\operatorname{polylog}}
\newcolumntype{P}[1]{>{\centering\arraybackslash}p{#1}}
\newcolumntype{M}[1]{>{\centering\arraybackslash}m{#1}}
\theoremstyle{definition}
\newtheorem*{def*}{Definition}
\newtheorem{definition}{Definition}
\newtheorem{theorem}{Theorem}[section]
\theoremstyle{definition}
\newtheorem{lemma}{Lemma}
\theoremstyle{definition}
\newtheorem*{prf*}{Proof}
\theoremstyle{definition}
\newtheorem*{prfthm*}{Proof of Theorem}
\newtheorem{remark}{Remark}
\begin{document}
\title{Improved Deterministic Leader Election in Diameter-Two Networks\thanks{The work of A. R. Molla and S. Sivasubramaniam were supported, in part, by ISI DCSW/TAC Project (file no. G5446 and G5719).}}

\author{Manish Kumar \thanks{Indian Statistical Institute, Kolkata 700108, India.\hbox{E-mail}:~{\tt manishsky27@gmail.com}.}
\and Anisur Rahaman Molla \thanks{Indian Statistical Institute, Kolkata 700108, India.  \hbox{E-mail}:~{\tt anisurpm@gmail.com}.} \and Sumathi Sivasubramaniam \thanks{Indian Statistical Institute, Kolkata 700108, India. \hbox{E-mail}:~{\tt sumathivel89@gmail.com}.}}
\date{}

\maketitle \thispagestyle{empty}

\maketitle

\begin{abstract}
In this paper, we investigate the leader election problem in diameter-two networks. Recently, Chatterjee et al. [DC 2020] studied the leader election in diameter-two networks. They presented a $O(\log n)$-round deterministic {implicit} leader election algorithm which incurs optimal $O(n\log n)$ messages, but a drawback of their algorithm is that it requires knowledge of $n$. An important question-- whether it is possible to remove the assumption on the knowledge of $n$ was left open in their paper. Another interesting open question raised in their paper is whether {\em explicit} leader election can be solved in $\Tilde{O}(n)$ messages deterministically. In this paper, we give an affirmative answer to them. Further, we solve the {\em broadcast problem}, another fundamental problem in distributed computing, deterministically in diameter-two networks with $\Tilde{O}(n)$ messages and $\tilde{O}(1)$ rounds without the knowledge of $n$. In fact, we address all the open questions raised by Chatterjee et al. for the deterministic leader election problem in diameter-two networks. In particular, our results are:
\begin{enumerate}
    \item We present a deterministic {\em explicit} leader election algorithm which takes $O(\log \Delta)$ rounds and $O(n \log \Delta)$ messages, where $n$ in the number of nodes and $\Delta$ is the maximum degree of the network. The algorithm works without the knowledge of $n$. The message bound is tight due to the matching lower bound, showed Chatterjee et al. [DC 2020]. 
    
    \item We show that {\em broadcast} can be solved deterministically in $O(\log \Delta)$ rounds using $O(n \log \Delta)$ messages. More precisely, a broadcast tree can be  computed with the same complexities and the depth of the tree is $O(\log \Delta)$. This also doesn't require the knowledge of $n$. 
\end{enumerate} 
To the best of our knowledge, this is the first $\Tilde{O}(n)$ deterministic result for the explicit leader election in the diameter-two networks, that too without the knowledge of $n$. 
\end{abstract}
\noindent {\bf Keywords:} Distributed Algorithm, Leader Election, Message Complexity, Diameter-two Graphs.
\section{Introduction}\label{sec: introduction}
In the four decades since its inception, leader election has remained a well explored and fundamental problem in distributed networks \cite{LSP82, Lann'77, PSL80}. The basic premise of leader election is simple: given a group of $n$ nodes, a unique node is elected as a leader (where $n$ denotes the number of nodes in the network).  Depending on the nodes knowledge of the leader, there are two popular versions. In the first version (known as the {\em implicit} leader election), the non-leader nodes are not required to know the leader's identity; it is enough for them to know that they are not the leader. The {\em implicit} leader election is quite well studied in literature \cite{AW04, KM21, KPP0T15, KPPRT15, L96}. In the other version (known as {\em explicit} leader election), the non-leader nodes are required to learn the leader's identity. The implicit version of  the leader election is the generalized version of the (explicit) leader election.  Clearly, there is a lower bound of  $\Omega(n)$ for message complexity in the explicit version of the problem. In this paper, we study the explicit version of the problem. In particular, we show an improvement on the existing deterministic solution for the implicit leader election algorithm presented in~\cite{CPR20} and provide an algorithm for turning the implicit leader election explicit without any additional overhead on messages.

Leader election has been studied extensively with respect to both message and round complexity in various graph structures like rings \cite{Lann'77,YM18}, complete graphs \cite{AG91, AMP18, Humblet84, KKM90, KMZ87, KMZ89, Singh91}, diameter-two networks \cite{CPR20} etc., as well as in general graphs \cite{GHS83, GRS18, KPPRT15, L96, Peleg90} \footnote{We interchangeably  use the word "graph" and "network" throughout the paper.}. Earlier works were primarily focused on providing deterministic solutions. However, eventually, randomized algorithms were explored to reduce mainly the message complexity (see \cite{AMP18, GRS18, KPP0T15, KPPRT15} and the references there in). Kutten et al. gave the fundamental lower bound for leader election in general graphs with $\Omega(m)$ message complexity and $\Omega(D)$ round complexity \cite{KPP0T15}, where $m$ is the number of edges and $D$ is the diameter of the graph. This bound is applicable for all graphs with diameters greater than two, whether the algorithm is deterministic or randomized. For the clique, recently a tight message lower bound of $\Omega(n \log n)$ is established by Kutten et al. \cite{KRTZ23} for the deterministic algorithms under simultaneous wake-up of the nodes. The same lower bound was shown earlier by Afek and Gafni (1991) \cite{AG91}, but assumes adversarial wake-up. Table~\ref{tbl: deterministic_list} presents an overview of the results (deterministic). Recently, diameter-two networks were explored, and the message complexity was settled by providing a deterministic algorithm with $O(n \log n)$ message complexity \cite{CPR20}. 

Our work is closely related to the work by Chatterjee et al. \cite{CPR20}. In their work, the authors studied leader election (the {\em implicit} version) in diameter-two networks. They presented a deterministic algorithm with $O(n \log n)$ message complexity and $O(\log n)$ round complexity. Crucially, their algorithm requires prior knowledge on the size of the network, $n$. In comparison to this, our algorithm elects a leader {\em explicitly} without prior knowledge of $n$. Our algorithm uses $O(n \log \Delta)$ messages and finishes in $O(\log \Delta)$ rounds, where $\Delta$ is the maximum degree of the graph (see, Table~\ref{tbl:comparision}). In addition to this, we show how to leverage the edges used during the leader election protocol to create a broadcast tree for the diameter-two graphs with a message and round complexity of $O(n \log \Delta)$ and $O(\log \Delta)$, respectively. Computing a broadcast tree efficiently is another fundamental problem in distributed computing. A broadcast tree can be used as a subroutine to many distributed algorithms which look for message efficiency. Finding a deterministic $\Tilde{O}(n)$-message and $\Tilde{O}(1)$-round broadcast algorithm in diameter-two networks was also left open in \cite{CPR20}. We have addressed it.

\begin{table*}[tbh]
\begin{tabular}{|P{4.00cm}|P{3.50cm}|P{3.50cm}|P{3.80cm}|}
\hline
\multicolumn{4}{|c|} {\sc Deterministic (Explicit) Leader Election Results}\\
\hline
 Paper & Message Complexity & Round Complexity  &  Graph of Diameter  \\
\hline
&&&\\
Afek-Gafni \cite{AG91} & $O(n \log n)$& $O(\log n)$ * & $D=1$\\
     &&&\\
 Kutten et al. \cite{KRTZ23} & $\Omega(n \log n)$& $\Omega(1)$& $D=1$\\
     &&&\\
\textbf{This paper} & $O(n \log \Delta)$& $O(\log \Delta)$& $D=2$\\
     &&&\\
Chatterjee et al. \cite{CPR20}   & $\Omega(n \log n)$& $\Omega(1)$ **& $D=2$\\
     &&&\\
Kutten et al. \cite{KPP0T15} & $O(m \log n)$& $O(D \log n)$&$D\geq 3$ \\
     &&&\\
Kutten et al. \cite{KPP0T15} & $\Omega(m)$& $\Omega(D)$& $D\geq 3$\\
     &&&\\
\hline
\end{tabular}
    \caption{Best known deterministic leader election results on networks with different diameters. Since $\Delta = \Omega(\sqrt{n})$ in diameter-two graphs, $\log \Delta = O(\log n)$, see the Remark~\ref{rem:lb-on-max-degree} below. So our upper bound doesn't violate the message lower bound in \cite{CPR20}. * Attaining $O(1)$ time requires $\Omega(n^{1+\Omega(1)}$ ) messages in cliques, whereas achieving $O(n \log n)$
messages requires $\Omega(\log n)$ rounds; see \cite{AG91}. ** $\Omega(1)$ is a trivial lower bound.} 
\label{tbl: deterministic_list}
\end{table*}

\textbf{Paper Organization:} In the rest of this section~\ref{sec: introduction}, we state our results. In Section~\ref{sec: model}, we present our model and definitions.  We briefly introduce various related works in  Section~\ref{sec: related_work}. We present our algorithms for deterministic leader election and broadcast tree formation in Section~\ref{sec: deterministic_leader}. And finally, we conclude in Section~\ref{sec: conclusion} with several open problems.

\subsection{Our Results}\label{sec: results}
Our work focuses on the deterministic leader election in diameter-two networks without the knowledge of number of nodes. Apart from this, by leveraging the  leader election protocol, we show that \textit{broadcast} can be solved deterministically, matching the complexity of the leader election algorithm. Specifically, we have the following results.
\begin{enumerate}
    \item We present a deterministic {\em explicit} leader election algorithm which takes $O(\log \Delta)$ rounds and $O(n \log \Delta)$ messages, where $n$ in the number of nodes and $\Delta$ is the maximum degree of the network. The algorithm works without the knowledge of $n$. The message bound is tight due to the matching lower bound, showed by Chatterjee et al. in \cite{CPR20}.
    \item We show that {\em broadcast} can be solved deterministically in $O(\log \Delta)$ rounds using $O(n \log \Delta)$ messages. More precisely, we show that a broadcast tree, of depth at most $O(\log \Delta)$ can be  computed with the same complexities.
\end{enumerate}

\section{Model and Definition}\label{sec: model}
Our model is similar to the one in  \cite{CPR20}. We consider the distributed network to be an undirected graph $G = (V, E)$ of $n$ nodes and diameter $D=2$.  Each node has a unique ID of size $O(\log n)$ bits. The model is a \textit{clean network model}  in the sense that the nodes are unaware of their neighbors' IDs initially, also known as $KT_0$ model \cite{Pelege2000}. 
The network is synchronous. The nodes communicate via passing messages in a synchronous round. We limit each message to be of size at most $O(\log n)$ bits as in the CONGEST communication model in distributed networks \cite{Pelege2000}. In each round, nodes may send messages, receive messages and perform some local computation. The round complexity of an algorithm is the total number of rounds of communication taken by the algorithm before termination. The message complexity is the total number of messages exchanged in the network throughout the execution of the algorithm.  Throughout this paper, we assume that all nodes are awake initially and simultaneously start executing the algorithm.

We will now formally define the implicit and explicit version of leader election in our model.
\begin{definition}[\textbf{Implicit Leader Election}]
Consider an $n$-node distributed network. Let each node maintain a state variable that can be set to a value in $\{\perp,\, NONELECTED, \, ELECTED\}$, where $\perp$ denotes the `undecided' state. Initially, all nodes set their state to $\perp$. In the implicit version of leader election, it requires that exactly one node has its state variable set to $ELECTED$ and all other nodes are in state $NONELECTED$. The unique node whose state is $ELECTED$ is the leader.  
\end{definition}

\begin{definition}[\textbf{Explicit Leader Election}]
Consider an $n$-node distributed network. Let each node maintain a state variable that can be set to a value in $\{\perp,\, NONELECTED, \, ELECTED\}$, where $\perp$ denotes the `undecided' state. Initially, all nodes set their state to $\perp$. In the explicit version of leader election, it requires that exactly one node has its state variable set to $ELECTED$ and all other nodes are in state $NONELECTED$. Further, the $NONELECTED$ nodes must know the identity of the node, whose state is  $ELECTED$, the leader.
\end{definition}

\section{Related Work}\label{sec: related_work}
\begin{table*}[t]
\begin{tabular}{|P{3.10cm}|P{3.50cm}|P{3.00cm}|P{1.80cm}|P{3.00cm}|}
\hline
\multicolumn{5}{|c|} {\sc Deterministic Leader Election in Diameter-Two Graphs}\\
\hline
 Paper & Message Complexity & Round Complexity  &  Type & Knowledge of $n$ \\
\hline
&&&&\\
Chatterjee et al. \cite{CPR20}& $O(n \log n)$& $O(\log n)$              & Implicit & YES\\
&&&&\\
    
    
This paper & $O(n \log \Delta)$& $O(\log \Delta)$& Explicit& NO\\
     &&&&\\
\hline
\end{tabular}
    \caption{Comparison of the current paper to the state-of-the-art.}
\label{tbl:comparision}
\end{table*}
    
    

In 1977, the leader election problem was introduced by Le Lann in the ring network \cite{Lann'77}. Since then the problem has been studied extensively in different settings. The leader election problem has been explored in both implicit and explicit versions over the years~\cite{KKMPT12,L96,Peleg90,KPPRT15, GRS18, KM021} for a variety of models and settings, and for various graph topologies such as cliques, cycles, mesh, etc., (see~\cite{KKMPT12, KM21, KKKSS08, KSSV06, Peleg90, S07, Tel94,refai2010leader} and the references therein for more details). In general, the implicit leader election suffices for most networks.


Both deterministic and randomized solutions exist for leader election. For the randomized case, for complete graphs, Kutten et al.~\cite{KPPRT15} showed that $\Tilde{\Theta}(\sqrt{n})$ is a tight message complexity bound for randomized (implicit) leader election. For any graph with diameter greater than 2, the authors in~\cite{KPPRT15} showed that $\Omega(D)$ is a lower bound for the number of rounds for leader election using a randomized algorithm (they also showed a lower bound for the message complexity, $\Omega(m)$). Recently, Chatterjee et al.,~\cite{CPR20} showed a lower bound of $\Omega(n)$ for the message complexity of randomized leader election in diameter-two graphs.

In the deterministic case, it is known that $\Theta(n\log n)$ is a tight bound on the message complexity for complete graphs~\cite{AG91,KRTZ23}. This tight bound also carries over to the general case as seen from~\cite{AG91,KMZ84,KMZ89}. In our work, we restrict our model to graphs of diameter-two. For diameter-two graphs, Chatterjee and colleagues  provide a $O(\log n)$ round algorithm that uses $O(n \log n)$ messages. However, their algorithm requires knowledge of $n$, our algorithm provides an algorithm that requires no prior knowledge of $n$ and runs in $O(\log n)$ rounds with $O(n\log n)$ message complexity.

\section{Deterministic Leader Election in Diameter-Two Networks}\label{sec: deterministic_leader}
We present a deterministic (explicit) leader election algorithm for diameter-two networks with $n$ nodes in which the value of $n$ is unknown to nodes in the network. In this section, we answer several questions raised in \cite{CPR20}. Specifically, we address the following: (i)  Can explicit leader election be performed in  $\Tilde{O}(n)$ messages in diameter-two graphs deterministically? (ii) Given the leader election algorithm, can broadcast can be solved deterministically in diameter-two graphs with $\Tilde{O}(n)$ message complexity and $O(\polylog n)$ rounds if $n$ is known, and crucially (iii) "Removing the assumption of the knowledge of n (or showing that it is not possible) for deterministic, implicit leader election algorithms with  $\Tilde{O}(n)$ message complexity and running in  $\Tilde{O}(1)$ rounds is open as well."  In this section, we solve the explicit leader election with $\Tilde{O}(n)$ message complexity, along with that our algorithm solves the explicit leader election without the knowledge of $n$; thus addressing the questions (i) and (iii). We further present a solution for the question~(ii) that too without the knowledge of $n$. 

\subsection{Algorithm}
Our algorithm is inspired from the work done by Chatterjee et al. \cite{CPR20}. They presented an algorithm for implicit leader election that ran in  $O(\log n)$ rounds with $O(n \log n)$ message complexity (with the knowledge of $n$). Our Algorithm~\ref{alg : deterministic}, achieves somewhat better result without the knowledge of $n$ and also elects the leader explicitly. 

As mentioned earlier (in Section~\ref{sec: model}), each node has a unique ID. For any node $v \in V$, let's denote the degree of $v$ by $d_v$ and the ID of $v$ by $ID_v$. The \emph{priority}  $\mathcal{P}_v$, of node $v$, is a combination of the degree and ID of the node $v$ such that $\mathcal{P}_v = \langle d_v, ID_v \rangle$. The leader is elected based on the priority, which is decided by the degree of the node. In the case of a tie, the higher ID gets the higher priority. Essentially, the node with the highest priority becomes the leader.

Our algorithm runs in two phases of $O(\log d_v)$ rounds each. In the first $O(\log d_v)$ rounds, we eliminate as many invalid candidates as possible. In the second phase, all candidates except the actual leader are also eliminated, culminating in the election of a unique leader.

\medskip

\noindent\textbf{Detailed description of the algorithm:}

\medskip
\noindent Initially, every node is a "candidate" and has an “active” status. Each node $v$ numbers its neighbors from $1$ to $d_v$ arbitrarily, denoted by $w_{v,1}, w_{v,2}, \dots, w_{v,d_v}$. For the first $i=1$ to $\log d_v$ rounds, if $v$ is active,  then node $v$ sends a message containing $\mathcal{P}_v$ to its neighbors $w_{v,2^{i-1}}, \cdots, w_{v,\text{min}\{d_v,2^i-1\}}$ . If $v$ encounters a priority higher than its own from its neighbors (either because a neighbor has a higher priority or has heard of a node with higher priority) then $v$ becomes "inactive" and "non-candidate". That is, $v$ does not send any further messages to its neighbors containing $v$'s priority. Although, $v$ may send higher priority message based on the received message's priority (explained later). Let $L_v$ denotes the ID of the current highest priority node known to $v$.  At the beginning of the execution, $L_v$ is simply $\mathcal{P}_v$. If at the end of the first $\log d_v$ rounds, $L_v=\mathcal{P}_v$ then $v$ declares itself leader temporarily. Further, $v$ waits for $\log d_v$ rounds. If at the end of $\log d_v$ rounds $v$ is still the candidate node ($v$ has not heard from a node about the higher priority) then $v$ becomes the leader.

There are two major phases to the algorithm. For the first $\log d_v$ rounds, we eliminate as many invalid candidates (the node which has encountered higher priority node) as possible, as follows. $N_v$ contains the ID of the neighbor that informed $v$ about the current highest priority. As mentioned before, $L_v$ contains the current highest priority known to $v$. Let  $\chi_v$ denote the (possibly empty) set of $v$'s neighbors from whom $v$ has received messages in a round during this phase, and $\mathcal{P}(\chi_v)$ be the set of $\mathcal{P}$s sent to $v$ by the members of $\chi_v$ such that $\mathcal{P}_u$ be the highest $\mathcal{P}$ in $\mathcal{P}(\chi_v)$. If $\mathcal{P}_u$ is higher than that of $L_v$ then $v$ stores the highest priority seen so far in $L_v$.  Further, $v$ informs $N_v$ about $L_v = \mathcal{P}_u$, i.e., the highest $\mathcal{P}$ it has seen so far. This particular step exploits the neighborhood intersection property to ensure that information about higher priority nodes is disseminated quickly. Then $v$ updates $N_v$. Finally, $v$ tells every member of $\chi_v$ about $L_v$, i.e., the highest $\mathcal{P}$ it has seen so far. If $L_v \neq P_v$ then $v$ becomes “inactive” and “non-candidate”. Notice that an "inactive" and "non-candidate" node $v$ only disseminates the information of higher priorities it hears, to $N_v$.

At the end of the first $\log d_v$ rounds, we begin the final phase of the election. If $v$ is still the candidate node then $v$ waits for $ \log (d_v)$ rounds. Furthermore, if $v$ does not receive any higher priority message then $v$ declares itself as the leader and informs its neighbors. Then each neighbor of $v$, say $u$, informs their neighbors about the election of $v$ via set of $\Psi_u$ nodes. On the other hand, if there exists a node whose priority is higher than the priority of $v$ then $v$ gets to know about the leader and informs all the nodes to whom $v$ has communicated (so far) about the leader's ID (that is the set $\Psi_v$) and exits. Hence, All the nodes elect the same leader whose priority is the highest. Our claim is that given certain properties of the degree (see Lemma~\ref{lem: size}) we can guarantee that the second phase of waiting for $\log \Delta$ rounds eliminates all but a unique candidate, which then becomes leader.

Now, we would discuss some important lemmas and the correctness of the algorithm. Finally, we conclude the result in Theorem~\ref{thm: deterministic}.

\begin{figure}

\begin{algorithm}[H]\caption{\sc Deterministic-Leader-Election: Code for a node $v$}\label{alg : deterministic}
\begin{algorithmic}[1]
\Require{A two diameter connected anonymous network. Each node possess unique ID.}
\Ensure{Leader Election.}

\Statex
\State $v$ becomes a “candidate” and “active”.
\State Let $\mathcal{P}_v = \langle d_v, ID_v \rangle$ be the priority of $v$. Priority is determined by degree, the node with the higher degree ($d_v$) has higher priority. The node's ID is used to break any ties.
\State $L_v \xleftarrow[]{}\mathcal{P}_v$ \Comment{$L_v$ is the current highest priority known to $v$.}
\State $N_v \xleftarrow[]{}\mathcal{P}_v$ \Comment{$N_v$ is the neighbor which informed about $L_v$.}
\State $v$ creates an arbitrary assignment of its neighbors based on its degree (from 1 to $d_v$) which are called $w_{v,1}, w_{v,2}, \cdots, w_{v,d_v}$ respectively.
\For {rounds $i=1$ to $\log d_v$} \label{loop: start}
    \If {$v$ is active}
        \State $v$ sends a “probe" message containing its priority $\mathcal{P}$ to its neighbors $w_{v,2^{i-1}}, \cdots, w_{v,\text{min}\{d_v,2^i-1\}}$.
       
    \EndIf
    \State Let $\chi_v$ be the possibly empty subset of $v's$ neighbors from which $v$ received messages in this round.
    \State Let $\Psi_v = \bigcup_1^i \chi_v$.
    \State Let $\mathcal{P}(\chi_v)$ be the set of $\mathcal{P}$s sent to $v$ by the members of $\chi_v$.
    \State Let $\mathcal{P}_u$ be the highest $\mathcal{P}$ in $\mathcal{P}(\chi_v)$.
    \If{$\mathcal{P}_u > L_v$} \label{line: if-clause}
        \State $L_v \xleftarrow[]{} \mathcal{P}_u$
        \State $v$ tells $N_v$ about $L_v = \mathcal{P}_u$, i.e., the highest $\mathcal{P}$ it has seen so far. \label{line: 17}
        \State $N_v \xleftarrow[]{} x$. \Comment{ $v$ remembers neighbor who told $v$ about $L_v$.}
        \State $v$ becomes "inactive" and "non-candidate".
    \EndIf
    \State $v$ tells every member of $\chi_v$ about $L_v$, i.e., the highest $\mathcal{P}$ it has seen so far. \label{line: 21}
    \EndFor \label{loop: for end}
    \If{$L_v =\mathcal{P}_v$}\label{line: leader}
        \State $v$ waits for $\log (d_v)$ rounds. If at the end of $\log (d_v)$ rounds, $L_v =\mathcal{P}_v$ then $v$ declares itself as leader and inform all the neighbors as well as exit the protocol.
    \EndIf
    \If{$v$ knows about the leader and $v$ is not the leader}
          \State Let $\Phi_v$ be the set of neighbors of $v$ to whom $v$ sent the messages before knowing about the leader.
          \State Let $\Psi_v=\Psi_v \bigcup \Phi_v$.
        \State $v$ informs $\Psi_v$ about the leader's ID and exit.
    \EndIf

\State All the nodes elect the same leader whose priority is the highest.
\end{algorithmic}
\end{algorithm}
\end{figure}

\begin{lemma}\label{lem: size}
Let $v$ be a node whose degree, $d_v$, is the highest among its neighbors and $\Delta$ is the maximum degree of the graph. There does not exist any diameter-two graph with $n$ nodes ($n>4$) such that $\Delta > d_v^2$.
\begin{proof}
For a node $v$, all nodes are at most $2$ hop distance away from $v$, since the diameter of the graph is $2$. Node $v$ has degree $d_v$ and its neighbors have degree at most $d_v$, by assumption. This gives an upper bound on $n$, that is, $n \leq d_v(d_v-1) + 1$, because each of the $d_v$ neighbors can have at most other $d_v-1$ neighbors (excluding $v$) each, and by the distance assumption there are no other nodes in the graph. Also, $\Delta$ can be at most $n-1$. Therefore, $\Delta < n < d_v^2+1$. Consequently, $d_v^2 > \Delta$. Hence, the lemma.

\end{proof}
\end{lemma}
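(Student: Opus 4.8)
The plan is to convert the diameter-two hypothesis into an upper bound on $n$ expressed purely in terms of $d_v$, and then to transfer that bound to $\Delta$ via the elementary inequality $\Delta \le n-1$. First I would fix the vertex $v$ of the hypothesis and split the remaining vertices into the neighbourhood $N(v)$, of size $d_v$, and the set $S$ of vertices lying at distance exactly two from $v$. Since the graph has diameter two, no vertex is farther than distance two from $v$, so $V = \{v\} \cup N(v) \cup S$ is a disjoint union and $n = 1 + d_v + |S|$; everything then reduces to bounding $|S|$.

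To bound $|S|$ I would count edges leaving $N(v)$, and this is exactly where the hypothesis that $v$ has the largest degree among its neighbours is used. Each $w \in N(v)$ satisfies $d_w \le d_v$, and since one of its incident edges is $wv$, the vertex $w$ is adjacent to at most $d_v-1$ vertices other than $v$. Every vertex of $S$ is, by the diameter-two property, adjacent to at least one vertex of $N(v)$, so the vertices of $S$ inject into the set of $N(v)$-edges that avoid $v$, of which there are at most $d_v(d_v-1)$. Hence $|S| \le d_v(d_v-1)$ and
\begin{equation*}
n \;=\; 1 + d_v + |S| \;\le\; 1 + d_v + d_v(d_v-1) \;=\; d_v^2 + 1 .
\end{equation*}
Combining this with $\Delta \le n-1$ yields $\Delta \le d_v^2$, which is precisely the negation of ``$\Delta > d_v^2$''; so no diameter-two graph with the stated property can exist.

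The step I expect to need the most care is the charging for $|S|$, where one must be sure $d_v(d_v-1)$ is a genuine \emph{upper} bound and not an undercount: the clean formulation is that each vertex of $S$ claims a private incident edge into $N(v)$, while the number of edges incident to $N(v)$ other than the $d_v$ edges to $v$ is at most $\sum_{w\in N(v)}(d_w-1)\le d_v(d_v-1)$, so the injection cannot overflow. The side condition $n>4$ plays only a minor role: combined with $n\le d_v^2+1$ it forces $d_v\ge 2$, which excludes degenerate tiny graphs and, should one want the strict inequality $\Delta<d_v^2$, is also the point at which one notes that a maximum-degree vertex lying in $S$ is non-adjacent to $v$ and hence has degree at most $n-2$.
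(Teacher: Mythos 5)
Your proof is correct and follows essentially the same route as the paper: count the vertices within distance two of $v$ using the hypothesis that $v$ has maximal degree among its neighbours, obtain $n \le d_v^2+1$, and conclude via $\Delta \le n-1$. Your accounting is in fact the more careful version --- the paper's intermediate bound $n \le d_v(d_v-1)+1$ omits the $d_v$ neighbours themselves from the count (the $5$-cycle violates it), but both arguments arrive at the conclusion $\Delta \le d_v^2$, which is all the lemma asserts.
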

\begin{remark}\label{rem:lb-on-max-degree}
It is clear that there does not exist any diameter-two network whose nodes are neither connected to $v$ nor its neighbor. Therefore, $d_v^2 \geq n$. This implies  $d_v \geq \sqrt{n}$. Hence, $\Delta \geq \sqrt{n}$.
\end{remark}

\begin{lemma}\label{lem: round_complexity}
Algorithm~\ref{alg : deterministic} solves the leader election in $O(\log \Delta)$ rounds, where $\Delta$ is the maximum degree of the graph.
\end{lemma}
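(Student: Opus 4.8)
The plan is to charge the running time to the two explicit loops of Algorithm~\ref{alg : deterministic} plus a short final dissemination step, and then bound each piece by $\log\Delta$. First I would analyze the probing phase (lines~\ref{loop: start}--\ref{loop: for end}). For a fixed node $v$ this loop runs from $i=1$ to $\log d_v$, and within a single iteration $v$ performs only a constant number of send/receive steps: it probes the block $w_{v,2^{i-1}},\dots,w_{v,\min\{d_v,2^i-1\}}$, collects $\chi_v$, forwards the best priority to $N_v$ (line~\ref{line: 17}), and replies to every member of $\chi_v$ (line~\ref{line: 21}). Hence the whole first phase costs $v$ only $O(\log d_v)$ rounds, and since $d_v\le\Delta$ for every node this is $O(\log\Delta)$ rounds. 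The same bound applies to the waiting phase after line~\ref{line: leader}: a node still in the candidate state waits exactly $\log d_v\le\log\Delta$ rounds before declaring itself leader. Thus every node is active for at most $2\log d_v+O(1)=O(\log\Delta)$ of its own rounds, and because the round complexity of the algorithm is the latest round in which any node is still communicating, it suffices to bound this maximum.

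The maximum is attained by the eventual leader $\ell$: since priority $\mathcal{P}_v=\langle d_v,ID_v\rangle$ is ordered first by degree, the globally highest-priority node has degree exactly $\Delta$, so $\ell$ probes for $\log\Delta$ rounds and then waits $\log\Delta$ rounds before announcing. This accounts for $2\log\Delta$ rounds up to the moment the leader declares itself, and it remains only to propagate the leader's identity to the whole network. Here the diameter-two assumption is essential: every node lies within distance $2$ of $\ell$, so once $\ell$ informs its neighbours, each neighbour $u$ forwards the announcement to its relay set $\Psi_u\cup\Phi_u$, and after at most two such hops every node has learned the leader. This contributes only $O(1)$ additional rounds, giving a total of $2\log\Delta+O(1)=O(\log\Delta)$.

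The step I expect to be the main obstacle is precisely this final dissemination bound, not the loop counting. Nodes of small degree exit their loops far earlier than $\ell$, so I must verify two things: that such a node is still willing to pass on the announcement after it has finished its own loop, and that the relay sets $\Psi_u\cup\Phi_u$ accumulated during probing collectively cover all of $\ell$'s two-hop neighbourhood, so that no vertex is left unreachable. Establishing that these relays reach every vertex in $O(1)$ rounds, rather than silently reintroducing a dependence on $n$ or $\Delta$, is where the diameter-two intersection property—the same structural fact exploited in Lemma~\ref{lem: size}—together with the coverage guaranteed by the $\chi_v$ sets, has to be invoked carefully. Once that coverage is pinned down, the loop-length accounting above is entirely routine.
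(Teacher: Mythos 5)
Your decomposition (probing loop, waiting loop, final dissemination) and the loop-counting are the same as the paper's, and that part is fine: both loops run for $\log d_v \le \log\Delta$ rounds per node. The genuine gap is in your final dissemination step. You claim the leader's identity reaches every node in $O(1)$ additional rounds because the graph has diameter two, with each neighbour $u$ of $\ell$ relaying to $\Psi_u\cup\Phi_u$. But $\Psi_u$ is \emph{not} the full neighbourhood of $u$: it contains only those nodes that happened to probe $u$ before going inactive. A distance-two node $x$ that became inactive after round~1 has probed only the single, arbitrarily chosen neighbour $w_{x,1}$, so the only relay edge into $x$ is from $w_{x,1}$, which need not be adjacent to $\ell$ and may itself sit at the end of a similar chain. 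Hence the two-hop coverage you are hoping for does not hold, and you correctly flag this as the main obstacle but leave it unresolved. The fix is simply not to insist on $O(1)$: the paper lets the announcement cascade through the $\Psi$-relay structure and charges $O(\log\Delta)$ rounds to it (this is the same connectivity/diameter bound on the relay graph that Lemma~\ref{lem: time} later makes explicit for the broadcast tree), which still yields the claimed $O(\log\Delta)$ total.

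A secondary difference: the paper's proof also invokes Lemma~\ref{lem: size} ($\Delta < d_v^2$, hence $\log\Delta < 2\log d_v$) to argue that a surviving candidate's waiting period of $\log d_v$ rounds is long enough for the true highest-priority node to have finished probing and for the information to reach $v$ — i.e., that the algorithm actually \emph{solves} leader election within these rounds, not merely that the loops are short. Your proposal bounds only the duration of the loops and does not engage with why $\log d_v$ waiting rounds suffice to eliminate false candidates; if you keep your structure, you should either import that argument from Lemma~\ref{lem: size} or state explicitly that correctness is deferred to the separate correctness discussion.
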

\begin{proof}
A candidate node $v$ becomes the leader if its priority is the highest among its neighbors (Line \ref{line: leader}). From Lemma \ref{lem: size}, we know that $\Delta < d_v^2$. Therefore, the node $v$ with degree $d_v$ waits for $\log d_v$ rounds, in that time, the node with degree $\Delta$ inform about its priority to $v$ (if any) and $v$ becomes inactive. Otherwise, $v$ consider $d_v$ as $\Delta$ and inform all its neighbors about its election. $v$'s neighbor further conveys the message to all other nodes in $\log \Delta$ rounds. Therefore, the round complexity of the algorithm is $O(\log \Delta)$.
\end{proof}


For the message complexity analysis we adapt a couple of results from \cite{CPR20}, since our algorithm (Algorithm~\ref{alg : deterministic}) uses the similar approach to keep a node active. In particular, we use the Lemma~11 and Lemma~12 from \cite{CPR20}, which used ID of the nodes to take a decision on the "active" or "inactive" nodes whereas our algorithm uses priority (which depends on degree and ID). Hence, the results also applies to our algorithm.  
The following two lemmas are adapted from Lemma~11 and Lemma~12 in \cite{CPR20}.

\begin{lemma}[\cite{CPR20}]\label{lem: active_node}
At the end of the round $i$, there are at most $\frac{n}{2^{i}}$ “active” nodes.
\begin{proof}
Consider a node $v$ that is active at the end of round $i$. This implies that the if-clause of Line~\ref{line: if-clause} of Algorithm~\ref{alg : deterministic} has not so far been satisfied for $v$, which in turn implies that 
$\mathcal{P}_v > \mathcal{P}_{w_{v,j}}$ for $1\leq j \leq 2^i-1$, therefore none of $w_{v,1}, w_{v,2}, \dots, w_{v, 2^i-1}$ is active after round $i$. Thus, for every active node at the end of round $i$, there are at least $2^i-1$ inactive nodes. We call this set of inactive nodes, together with $v$ itself, the "kingdom" of $v$ after round $i$ i.e.,
$$KINGDOM_i(v) \overset{\mathrm{def}}{=} \{v\} \cup w_{v,1}, w_{v,2}, \dots, w_{v, 2^i-1} \text{ and } |KINGDOM_i(v)| = 2^i.$$
If we can show that these kingdoms are disjoint for two different active nodes, then we are done.\\
Proof by contradiction. Suppose not. Suppose there are two active nodes $u$ and $v$ such that
$$u \neq v \text{ and } KINGDOM_i(u) \cap KINGDOM_i(v)= \phi$$
(after some round $i$, $1 \leq i \leq \log n$). Let $x$ be such that
$x \in KINGDOM_i (u) \cap KINGDOM_i (v)$. Since an active node obviously cannot belong to the kingdom of another active node, this $x$ equals neither $u$ nor $v$, and therefore,
$$x \in \big\{w_{v,1}, w_{v,2}, \dots, w_{v, 2^i-1} \big\} \cap \big\{w_{u,1}, w_{u,2}, \dots, w_{u, 2^i-1} \big\}, $$
that is, both $u$ and $v$ have sent their respective probe-messages to $x$. Then it is straightforward to see that $x$ would not allow $u$ and $v$ to be active at the same time. Case-by-case analysis can be found in \cite{CPR20}.
\end{proof}
\end{lemma}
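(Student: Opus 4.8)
The plan is to set up a charging argument: to each node that is still active at the end of round $i$ I will assign a block of $2^i$ distinct vertices, and then show these blocks are pairwise disjoint, so that the number of active nodes is at most $n/2^i$. The block assigned to an active node $v$ is its \emph{kingdom}: $v$ itself together with every neighbor $v$ has probed in rounds $1$ through $i$. Reading off the probe schedule of Algorithm~\ref{alg : deterministic}, in round $j$ an active node probes exactly the neighbors $w_{v,2^{j-1}}, \dots, w_{v,2^j-1}$, so after round $i$ the probed neighbors are precisely $w_{v,1}, \dots, w_{v,2^i-1}$; hence the kingdom has size exactly $2^i$. These indices are well defined: the loop runs only while $i \le \log d_v$, so $2^i - 1 \le d_v - 1 < d_v$, and a surviving active node indeed has the required neighbors.

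First I would observe that every kingdom member other than $v$ is inactive. If $v$ is active at the end of round $i$, the test in Line~\ref{line: if-clause} never fired for $v$, so $\mathcal{P}_v$ strictly exceeds the priority of each neighbor $v$ probed. By the feedback in Line~\ref{line: 21}, each probed neighbor $w_{v,j}$ is told about the higher priority $\mathcal{P}_v$ and is thereby driven inactive. In particular, no active vertex can lie in the kingdom of a \emph{different} active vertex, since otherwise it would have been forced inactive.

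The crux is to show that the kingdoms of two distinct active vertices $u \ne v$ are disjoint. Suppose instead that some $x$ lies in both; by the previous paragraph $x \notin \{u,v\}$, so $x$ was probed by both $u$ and $v$ and therefore received both $\mathcal{P}_u$ and $\mathcal{P}_v$. Priorities are totally ordered (ties in degree are broken by $ID$), so one is strictly larger, say $\mathcal{P}_u > \mathcal{P}_v$. I would now argue by cases on the rounds in which the two probes reach $x$. If they arrive in the same round, then $x$ reports $L_x \ge \mathcal{P}_u$ back to all of that round's senders via Line~\ref{line: 21}, so $v$ hears a priority above its own and goes inactive. If they arrive in different rounds, the lower-priority probe may precede the higher one, so direct feedback no longer suffices; here I would invoke the back-propagation of Line~\ref{line: 17}, which forwards each newly discovered record to $N_x$, the neighbor that had supplied the previous record. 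Following this chain shows that when a priority exceeding $\mathcal{P}_v$ first passes through $x$, the previous record-holder is notified, and hence $v$ is eventually driven inactive. Either way $v$ could not have survived round $i$, a contradiction, so the kingdoms are disjoint. This timing case-analysis is the only delicate point, and it is exactly the bookkeeping deferred to \cite{CPR20}. Granting disjointness, the kingdoms are pairwise disjoint subsets of $V$, each of size $2^i$, so if there are $k$ active vertices then $k \cdot 2^i \le n$, i.e.\ $k \le n/2^i$, which is the claim.
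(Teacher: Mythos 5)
Your proof follows essentially the same route as the paper's: the same kingdom construction of size $2^i$ per active node, the same observation that probed neighbors of a surviving active node are forced inactive, and the same disjointness-by-contradiction via a common probed vertex $x$, with the timing case analysis deferred to \cite{CPR20} exactly as the paper does (you in fact sketch slightly more of that case analysis than the paper itself). The argument is correct; the only nitpick is that a probed neighbor $w_{v,j}$ is driven inactive by the probe message itself triggering the if-clause of Line~\ref{line: if-clause}, rather than by the feedback of Line~\ref{line: 21}, but this does not affect the substance.
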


\begin{lemma}[\cite{CPR20}]\label{lem: round_message}
In round $i$, Algorithm~\ref{alg : deterministic} transmits at most $3n$ messages in the \textit{for} loop (from Line \ref{loop: start} to Line \ref{loop: for end}).
\end{lemma}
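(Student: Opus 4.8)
The plan is to classify the messages sent during a single iteration $i$ of the \textbf{for} loop into three disjoint types and bound each type by $n$. The three types are: (a) the \emph{probe} messages of Line~\ref{loop: start} that each \emph{active} node $v$ sends to the block of neighbors $w_{v,2^{i-1}}, \dots, w_{v,\min\{d_v,2^i-1\}}$; (b) the single message of Line~\ref{line: 17} by which a node forwards a newly-discovered higher priority to $N_v$; and (c) the reply messages of Line~\ref{line: 21} that each node sends to every member of $\chi_v$. Since the only transmitting statements in the loop body are exactly these, every message falls into one of the three categories, and bounding each by $n$ yields the claimed bound of $3n$.

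First I would bound the probe messages. By Lemma~\ref{lem: active_node} applied to round $i-1$, the number of nodes still active at the start of round $i$ is at most $n/2^{i-1}$ (for $i=1$ this reads $n/2^{0}=n$, the trivial count of all nodes). Each such active node sends probes only to the neighbors indexed $2^{i-1}$ through $\min\{d_v,2^i-1\}$, i.e., to at most $(2^i-1)-2^{i-1}+1 = 2^{i-1}$ neighbors. Multiplying, the total number of probe messages is at most $(n/2^{i-1})\cdot 2^{i-1} = n$, independently of $i$.

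Next I would bound the two kinds of ``response'' messages. The forward-to-$N_v$ step of Line~\ref{line: 17} lies inside the single if-clause of Line~\ref{line: if-clause}, so each node executes it at most once per round; summed over all nodes this contributes at most $n$ messages. For the replies of Line~\ref{line: 21}, observe that $\sum_{v} |\chi_v|$ counts exactly the probe messages \emph{received} across the network in this round, which equals the number of probes \emph{sent}, and is therefore at most $n$ by the previous paragraph. Hence Line~\ref{line: 21} also generates at most $n$ messages, and the three bounds add to $3n$.

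The main obstacle is making the partition precise, in particular pinning down the definition of $\chi_v$ so that the reply count does not become self-referential: $\chi_v$ must consist of the neighbors that sent $v$ a probe in this round, not of the neighbors that sent the Line~\ref{line: 17} or Line~\ref{line: 21} responses, for otherwise the accounting would count responses-to-responses and the clean identity ``replies $=$ received probes $=$ sent probes'' would break. Once $\chi_v$ is fixed as the set of probe-senders, this identity closes the argument. A secondary point to verify is the arithmetic of the probe block size, namely that the geometric doubling of the neighbor indices makes the product of the ``active-node'' count and the ``probes-per-node'' count telescope to exactly $n$ for every round $i$.
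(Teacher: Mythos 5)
Your proof is correct and follows essentially the same route as the paper: use Lemma~\ref{lem: active_node} to bound the number of active senders, observe that each active node probes a block of at most $2^{i-1}$ new neighbors so the probe count telescopes to $n$, and charge the Line~\ref{line: 17} and Line~\ref{line: 21} responses separately for another $2n$. If anything, your accounting is slightly more careful than the paper's (which loosely says each active node sends ``$2^i-1$'' probes in round $i$ rather than the per-round block of $2^{i-1}$), but the underlying argument is the same.
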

\begin{proof}
In round $i$, each active node sends exactly $2^i-1$ probe messages, and each probe-message generates at most two
responses (corresponding to Lines\ref{line: 17} and \ref{line: 21} of Algorithm~\ref{alg : deterministic}). Thus, in round $i$, each active node contributes to, directly or
indirectly, at most $3\cdot(2^i-1)$ messages. The result immediately follows from Lemma~\ref{lem: active_node}.
\end{proof}
\begin{lemma}\label{lem: message_complexity}
The message complexity of the Algorithm~\ref{alg : deterministic} is $O(n \log \Delta)$.
\end{lemma}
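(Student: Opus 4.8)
The plan is to bound the total messages by summing the per-round cost established in Lemma~\ref{lem: round_message} over the two phases of the algorithm, and then to invoke the degree-based round bound from Lemma~\ref{lem: round_complexity} (via Lemma~\ref{lem: size}) to cap the number of rounds at $O(\log\Delta)$. First I would recall that by Lemma~\ref{lem: round_message}, each iteration $i$ of the \textbf{for} loop transmits at most $3n$ messages, counting the probe messages together with the at most two responses each probe can trigger (Lines~\ref{line: 17} and~\ref{line: 21}). Since the first phase runs for $i = 1$ to $\log d_v$ rounds and, by Lemma~\ref{lem: size}, no node ever needs more than $\log \Delta \le 2\log d_v = O(\log d_v)$ rounds for the highest-degree node's priority to propagate, the number of rounds in the first phase is $O(\log\Delta)$. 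Multiplying the per-round bound $3n$ by $O(\log\Delta)$ rounds gives $O(n\log\Delta)$ messages for the probing phase.

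Next I would account for the second (waiting/confirmation) phase. Here I would argue that this phase also lasts $O(\log\Delta)$ rounds (again by Lemma~\ref{lem: round_complexity}, since a temporary leader waits $\log d_v$ rounds), and that the messages sent in it are of two kinds: a confirmation of the elected leader that the leader sends to its neighbors, and the dissemination by each neighbor to the remaining nodes through the recorded sets $\Psi_u$ (and $\Phi_v$). I would bound the confirmation traffic by observing that the leader sends to at most its $d_v \le \Delta$ neighbors, and each such neighbor forwards along the edges it already used, so the total number of edges ever traversed in dissemination is bounded by the edges used during probing. Since every message in the first phase traverses an edge that a probe created, the aggregate over both phases telescopes into the same $O(n\log\Delta)$ bound.

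The final step is to add the two phase contributions: $O(n\log\Delta) + O(n\log\Delta) = O(n\log\Delta)$, which yields the claim. The main obstacle I anticipate is the careful accounting of the \emph{second phase}: unlike the clean per-round $3n$ bound of Lemma~\ref{lem: round_message}, the leader-confirmation and broadcast-dissemination step reuses the neighbor sets $\Psi_v$, $\Phi_v$ accumulated across all earlier rounds, so I must verify that the total number of such forwarding messages does not exceed the number of edges already paid for during probing (rather than, say, blowing up to $\Theta(\text{edges})=\Theta(n\Delta)$). The key to controlling this is that each node only ever forwards the leader's identity along edges it actively used while it was a candidate, so the dissemination cost is charged against messages already counted, keeping the overall total at $O(n\log\Delta)$.
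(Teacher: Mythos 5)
Your proposal matches the paper's own argument: bound the \textbf{for}-loop phase by the per-round $3n$ bound of Lemma~\ref{lem: round_message} times the $O(\log\Delta)$ rounds of Lemma~\ref{lem: round_complexity}, then charge the leader-announcement traffic to the $\Psi$ edges, which are themselves bounded by the $O(n\log\Delta)$ messages already counted. Your treatment of the second phase is in fact somewhat more careful than the paper's one-line assertion, but the decomposition and the key lemmas invoked are the same.
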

\begin{proof}
 Each round transmits at most $3n$ messages (Lemma~\ref{lem: round_message}) and the execution of the Algorithm~\ref{alg : deterministic} (from Line \ref{loop: start} to Line \ref{loop: for end}) takes place in $O(\log \Delta)$ rounds (Lemma~\ref{lem: round_complexity}). Further, leader informs about its election via $\Psi$ edges which are $O(n \log \Delta)$. Therefore, the total number of message transmitted throughout the execution are:
$3n \cdot O(\log \Delta) + O(n \log \Delta) = O(n \log \Delta)$.
\end{proof}

\noindent\textbf{Correctness of the Algorithm:} In this, we show that all the nodes agree on a leader and the leader is unique. First, we show that all the nodes agree on a leader. If a node $v$ is still a candidate node at the end of the first phase, then it must have both i) explored all its neighbors and ii) never encountered a priority higher than its own. Thus, it can declare itself leader after waiting $\log d_v$ rounds. Note that a waiting period of $\log d_v$ is enough because from Lemma~\ref{lem: size} we know that $\Delta < d_v^2$. This guarantees that the highest degree is made leader.

Now, we show that the known leader is unique. If not, then suppose there exist two nodes $u$ and $v$ such that $u$ agrees on a leader $l_1$ and $v$ agree on a leader $l_2$. From algorithm~\ref{alg : deterministic}, $l_1$ should have the highest priority in its neighbors and similarly, $l_2$ should have the highest priority in its neighbors. Since it is a diameter two graph, therefore, there should be at least one node common among $l_1$ and $l_2$. Therefore, both the node can't have the highest priority among their neighbors, which is a contradiction. Therefore, we can say all the nodes agree on the unique leader.

From the above discussion, we conclude the following result.
\begin{theorem}\label{thm: deterministic}
There exists a deterministic (explicit) leader election algorithm for $n$-node anonymous networks with diameter two that sends $O(n \log \Delta)$ messages and terminates in $O(\log \Delta)$ rounds, where $\Delta$ is the maximum degree of the network.
\end{theorem}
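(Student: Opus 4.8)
The plan is to assemble Theorem~\ref{thm: deterministic} from three ingredients that are essentially already in place: the round bound (Lemma~\ref{lem: round_complexity}), the message bound (Lemma~\ref{lem: message_complexity}), and a correctness argument showing that exactly one node is elected and that every other node learns its identity. Since the two complexity lemmas are established independently, the substantive work lies in the correctness proof, which I would split into three claims: a leader always exists, the leader is unique, and the fixed per-node waiting time of $\log d_v$ rounds is sufficient for both to hold.

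For existence, I would argue that the node $v^\ast$ of globally highest priority $\mathcal{P}_{v^\ast}=\langle d_{v^\ast}, ID_{v^\ast}\rangle$ never receives a higher priority from any neighbor in either phase, so its condition $L_{v^\ast}=\mathcal{P}_{v^\ast}$ at Line~\ref{line: leader} persists through the waiting period and it declares itself leader. For uniqueness, I would invoke the diameter-two intersection property: if two distinct nodes $u$ and $v$ both declared themselves leader, each must hold the strictly highest priority in its own closed neighborhood; but in a diameter-two graph $u$ and $v$ either are adjacent or share a common neighbor, and in either case a single vertex cannot rank both $u$ and $v$ as its highest-priority neighbor, a contradiction. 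This is precisely the argument sketched in the correctness paragraph preceding the theorem, which I would simply formalize.

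The delicate step, and where I expect the main obstacle, is the timing/sufficiency claim: a candidate $v$ commits after only $\log d_v$ waiting rounds, yet it must be certain that no higher-priority node anywhere in the graph will later override it. The key is Lemma~\ref{lem: size}: whenever $v$ is locally maximal we have $\Delta < d_v^2$, hence $\log \Delta < 2\log d_v$, so the two phases together span enough rounds for the highest-priority node's message to reach $v$ along a two-hop dissemination path guaranteed by the diameter-two structure. I would make this rigorous by tracking, round by round, how the current maximum $L$ propagates along the $N_v$ back-pointers and the $\chi_v$ response edges, showing that by the end of the probing phase every surviving candidate already knows the true maximum within its two-hop ball, and that the subsequent $\log d_v$ rounds suffice to confirm it before any incorrect self-declaration can occur.

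Finally, with correctness in hand, I would invoke Lemma~\ref{lem: round_complexity} for the $O(\log \Delta)$ round bound and Lemma~\ref{lem: message_complexity} for the $O(n\log \Delta)$ message bound, noting that the explicit dissemination of the leader's identity over the $\Psi$ edges is already folded into the latter; combining these with the correctness claims yields the theorem.
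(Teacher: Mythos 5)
Your proposal is correct and follows essentially the same route as the paper: the theorem is assembled from Lemma~\ref{lem: round_complexity} for the round bound, Lemma~\ref{lem: message_complexity} for the message bound, and a correctness argument consisting of exactly the existence, uniqueness-via-common-neighbor, and waiting-time-sufficiency-via-Lemma~\ref{lem: size} claims you describe. The only difference is that you promise a more rigorous round-by-round propagation argument for the timing step, which the paper leaves at the level of the sketch in its ``Correctness of the Algorithm'' paragraph.
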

\begin{remark}
The implicit deterministic leader election algorithm presented in \cite{CPR20} can be converted to an explicit leader election algorithm in the same way as done in Algorithm~\ref{alg : deterministic}. 
\end{remark}


\subsection{Broadcast Tree Formation}\label{sec:broadcast-tree}
In Algorithm~\ref{alg : deterministic}, the nodes agree on the leader explicitly. In this section, we exploit the edges used during the leader election algorithm (Algorithm~\ref{alg : deterministic}) and create a broadcast tree of height $O(\log \Delta)$ (Algorithm~\ref{alg : broadcast}). This also allows to reduce the message complexity. 
The process is simple. The leader, say $\ell$, initiates the flooding process by broadcasting its ID to its neighbors, forming the root of the tree $T$. All of its neighbors become a part of $T$. At any point in the algorithm, the leaves of $T$ do the following. Let $v$ be a leaf in $T$ in some round. In that round, $v$ sends its own ID to  the nodes in $\Psi_v$ (used in Algorithm~\ref{alg : deterministic}). Non tree nodes which receive an ID $v$ earlier become a part of $T$ with $v$ as its parent. If a non-tree node receives multiple messages, then it chooses the higher ID as its parent. The algorithm ends when all nodes have become a part of $T.$ Note that since only the leaves send out messages in each round and each node (except the root node, i.e., leader node) possess only one parent, we avoid the creation of cycles.

Let us now show some important lemmas which support the correctness of the algorithm.
In particular, Lemma \ref{lem: height} shows Algorithm~\ref{alg : broadcast} forms a tree of height $O(\log \Delta)$. The round complexity and message complexity of the Algorithm~\ref{alg : broadcast} is shown by Lemma~\ref{lem: time} and Lemma~\ref{lem: broadcast_message}, respectively. Finally, we conclude with message and round complexity as well as height of the tree in Theorem~\ref{thm: tree_formation}.





\begin{algorithm}[h]
\caption{\sc Broadcast-Tree-Formation}\label{alg : broadcast}
\begin{algorithmic}[1]
\Require{A diameter-2 connected network graph $G$ in which each node possess unique ID.
}
\Ensure{Tree Structure $T$.}

\State First run Algorithm~\ref{alg : deterministic} to elect the leader $\ell$. Each node also keeps track of its $\Psi_v$ (created during the course of the algorithm). 

\State $\ell$ becomes root of $T$. $\ell$ then broadcasts its ID as an invite to all its neighbors. And its neighbors become its children in $T$. \label{line: broadcast_begin}
\While{there are nodes outside of $T$} \Comment{Takes $O(\log \Delta)$ rounds.}

\State  Each node $v\in T$ broadcasts its ID to the nodes in  $\Psi_v$.
\If{node $u \notin T$ receives IDs from nodes in tree $T$}
    \State $u$ accept invitation based on the highest priority node, say $v$, and becomes $v$'s child in $T$. 
\EndIf    
\EndWhile  \label{line: broadcast_end}
\end{algorithmic}
\end{algorithm}


\begin{lemma}\label{lem: time}
In $O(\log \Delta)$ rounds, all nodes are guaranteed to be part of the tree $T$.
\end{lemma}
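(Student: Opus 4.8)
The plan is to reduce the round count to the height of the tree $T$ and then invoke the height bound. I would argue two facts. First, the \textbf{while} loop grows $T$ strictly level by level: in each iteration only the current leaves of $T$ broadcast their IDs to their respective $\Psi_v$ sets, and a node $u \notin T$ that accepts an invitation joins as a child of the inviting leaf. Consequently a node at depth $k$ in $T$ cannot be added before iteration $k$, since its parent (at depth $k-1$) must first have joined and become a leaf. This gives an exact correspondence between the number of loop iterations and the depth of the deepest node that ever joins, i.e. the height of $T$. Combined with the initial direct broadcast in Line~\ref{line: broadcast_begin} (which places all of $\ell$'s neighbors at level $1$), the total number of rounds is $1$ plus the height of $T$. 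Invoking Lemma~\ref{lem: height}, which bounds the height of $T$ by $O(\log \Delta)$, then yields the claimed $O(\log \Delta)$ round bound immediately.

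The step requiring the most care is \emph{completeness}: I must show the loop actually adds \emph{every} node, so that $T$ spans $V$ rather than stalling on some uncovered vertex. Here I would exploit the diameter-two property together with the coverage already guaranteed by the leader-election run. Since $G$ has diameter two, every node lies within distance two of $\ell$; the level-$1$ nodes are exactly $\ell$'s neighbors and are reached directly, while every remaining node shares a common neighbor with $\ell$. The key point is that the $\Psi_v$ sets recorded during Algorithm~\ref{alg : deterministic} collectively connect $\ell$ to all of $V$: because that algorithm terminates with every node having learned the leader's identity (the correctness argument preceding Theorem~\ref{thm: deterministic}), the communication edges captured in the $\Psi_v$ sets form a connected subgraph containing $\ell$ and spanning $V$. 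I would make this precise by showing that as long as some node $u$ remains outside $T$, there exists a tree node $v$ with $u \in \Psi_v$, so $u$ is invited in the next iteration; iterating this, no node can remain permanently outside $T$, and the loop terminates with $T$ spanning $V$.

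The main obstacle, therefore, is establishing this completeness claim rigorously — tying the $\Psi$-edges used by the broadcast to the reachability that the leader election already guarantees, rather than merely assuming that the $\Psi_v$ sets cover the graph. Once completeness is in hand, the round bound follows at once from the level-by-level growth and the height bound of Lemma~\ref{lem: height}. I expect the remaining details (that each iteration is a single synchronous round, and that choosing the highest-priority inviter as parent keeps $T$ acyclic) to be routine.
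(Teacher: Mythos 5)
There is a genuine gap here: your argument is circular. You reduce the round bound to the height of $T$ (correctly observing that the number of while-loop iterations equals the depth of the deepest node that ever joins) and then invoke Lemma~\ref{lem: height} to bound that height by $O(\log \Delta)$. But in the paper Lemma~\ref{lem: height} is proved \emph{from} Lemma~\ref{lem: time}: its proof says the height is at most the number of iterations, and the number of iterations is $O(\log \Delta)$ \emph{because} Lemma~\ref{lem: time} guarantees all nodes have joined by then. Since, as you yourself note, rounds and height are in exact correspondence, citing the height bound to prove the round bound is using the statement to prove itself. There is no independent structural reason for the height to be $O(\log \Delta)$; that bound has to come from somewhere, and that somewhere is precisely the content of the present lemma.

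The missing substance is quantitative, not just the completeness claim you flag. Your completeness argument (whenever $T \neq V$, some tree node $v$ has a non-tree node in $\Psi_v$) only yields that at least one node joins per iteration, i.e.\ termination in at most $n$ rounds --- it does not give $O(\log \Delta)$. What the paper actually does is define an auxiliary graph $G'$ whose edges are $\ell$'s incident edges in $G$ together with the pairs $(v,u)$ for $u \in \Psi_v$, assert that $G'$ is connected with \emph{eccentricity $O(\log \Delta)$ from $\ell$} (because in Algorithm~\ref{alg : deterministic} the leader's identity reaches every node along these edges within $O(\log \Delta)$ rounds), and then induct on the BFS levels of $G'$: all nodes at distance $i$ from $\ell$ in $G'$ are in $T$ after $i$ iterations. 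That radius bound on the $\Psi$-graph is the real work of the lemma, and it is absent from your proposal. The rest of your outline (level-by-level growth, acyclicity, one iteration per synchronous round) is fine, but without an independent argument that every node lies within $O(\log \Delta)$ $\Psi$-hops of $\ell$, the claimed bound does not follow.
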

\begin{proof}
This is guaranteed from the use of leader election algorithm. Consider the graph $G'$ constructed as follows. Let $\ell$'s neighbors be its neighbors in $G$. For every other node $v\neq \ell$ its neighbors are $\psi_v$. Clearly, from Algorithm~\ref{alg : deterministic}, $G'$ is connected (as every node learns of $\ell$) and of diameter $O(\log \Delta)$. Let level $i$ denote all nodes that are at most $i$ hops away from $\ell$ in $G'$. We claim that in $i$ rounds, all nodes in level $i$ would become a part of the tree $T$. By using induction, this is clearly true for is $i=1$. Assuming it's true for $i$, nodes of $i+1$ would become part of the tree next as they are in the $\Psi_v$ of at least one node in level $i$ and thus would get an invite. And since the number of levels can be at most $O(\log \Delta)$, all nodes become part of $T$ in $O(\log \Delta)$ rounds.
\end{proof}

\begin{lemma}\label{lem: height}
Algorithm~\ref{alg : broadcast} forms a tree of height $O(\log \Delta)$.
\end{lemma}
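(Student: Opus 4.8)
The plan is to bound the height of $T$ by the number of rounds Algorithm~\ref{alg : broadcast} runs, and then invoke Lemma~\ref{lem: time} to conclude that this number is $O(\log \Delta)$. The key observation is that the depth of any node in $T$ is controlled by the round in which it is added: a node can only acquire a parent that already belongs to the tree, and a tree node can issue invitations only in rounds after it has itself joined.

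First I would make precise the timing of the \textbf{while} loop. In any round $r$, the nodes that broadcast their IDs are exactly those already in $T$ at the start of round $r$, i.e.\ nodes that were added in some round $r' \le r-1$. A node $u \notin T$ that accepts an invitation in round $r$ therefore chooses as its parent some node $v$ that joined $T$ in a strictly earlier round. This is the structural fact on which everything else rests.

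Next I would prove, by induction on $r$, that every node added to $T$ in round $r$ has depth at most $r$. For the base case, the root $\ell$ has depth $0$ and is placed in round $0$ (Line~\ref{line: broadcast_begin}), while its neighbors, added in round $1$, have depth $1$. For the inductive step, suppose the claim holds for all rounds smaller than $r$. A node $u$ added in round $r$ has a parent $v$ added in some round $r' \le r-1$; by the induction hypothesis $\operatorname{depth}(v) \le r'$, and since $\operatorname{depth}(u) = \operatorname{depth}(v)+1$ we obtain $\operatorname{depth}(u) \le r' + 1 \le r$. Hence the depth of a node never exceeds the round in which it joins, which also matches the level structure of the auxiliary graph $G'$ used in the proof of Lemma~\ref{lem: time}.

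Finally, Lemma~\ref{lem: time} guarantees that every node becomes part of $T$ within $O(\log \Delta)$ rounds, so no node is added after round $O(\log \Delta)$. Combining this with the inductive bound, every node has depth $O(\log \Delta)$, and therefore the height of $T$ is $O(\log \Delta)$, as claimed. The step I would be most careful about is the timing argument that a node's parent always joined in a strictly earlier round; this is precisely what prevents the depth from growing faster than the round count and is the reason the height matches the running time rather than potentially exceeding it.
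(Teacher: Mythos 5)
Your proof is correct and follows essentially the same route as the paper's: the paper argues that each iteration of the while loop extends the height by at most one and then invokes the $O(\log \Delta)$ round bound of Lemma~\ref{lem: time}, which is exactly your induction that a node's depth is bounded by the round in which it joins. Your version merely makes the "parent joined strictly earlier" timing argument explicit, which the paper leaves implicit.
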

\begin{proof}
Since in each iteration of the while loop, the height of the tree is extended by at most 1 (that is by attaching children to the leaves of $T$). And since the algorithm ensures that all nodes have become a part of $T$ in $O(\log \Delta)$ iterations of the while loop, the height of $T$ can not be more than $O(\log \Delta)$. Notice that since each node accepts only one invite, there can be no creation of a cycle.
\end{proof}
\begin{remark}
The diameter of the graph created by Algorithm~\ref{alg : broadcast} is $O(\log \Delta)$.
\end{remark}

\begin{lemma}\label{lem: broadcast_message}
Algorithm~\ref{alg : broadcast} takes $O(n \log \Delta)$ messages.
\end{lemma}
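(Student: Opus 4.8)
The plan is to split the message count into two independent parts: the cost of the preliminary leader-election run that opens Algorithm~\ref{alg : broadcast}, and the cost of the tree-building while loop (Lines~\ref{line: broadcast_begin}--\ref{line: broadcast_end}). The first part is immediate: executing Algorithm~\ref{alg : deterministic} costs $O(n\log\Delta)$ messages by Lemma~\ref{lem: message_complexity}, and along the way each node $v$ records its set $\Psi_v$ at no extra asymptotic cost. So the whole task reduces to bounding the messages sent during the while loop.

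For the while loop, the key quantity to control is $\sum_{v}|\Psi_v|$. First I would argue that over the entire execution each node $v$ transmits its ID to the nodes of $\Psi_v$ only once --- namely in the round immediately after $v$ joins $T$ --- rather than in every iteration: once $v$ has issued these invitations it has no reason to repeat them, since any neighbour in $\Psi_v$ still outside $T$ will already have seen $v$'s ID. Under this send-once accounting the invitations contribute exactly $\sum_v|\Psi_v|$ messages, and the acceptances (each non-root node fixing a single parent) contribute at most $n-1$ more.

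It then remains to show $\sum_v|\Psi_v|=O(n\log\Delta)$. The plan here is to charge each element of $\Psi_v$ to a message of Algorithm~\ref{alg : deterministic}: by construction $\Psi_v=\bigcup_i\chi_v$ (possibly enlarged by $\Phi_v$), so every member of $\Psi_v$ either sent a probe/response to $v$ or received one from $v$ during the for loop. Hence $\sum_v|\Psi_v|$ is at most the total number of messages exchanged in Algorithm~\ref{alg : deterministic}, which is $O(n\log\Delta)$ by Lemma~\ref{lem: message_complexity} (equivalently, by summing the per-round bound $3n$ of Lemma~\ref{lem: round_message} over the $O(\log\Delta)$ rounds). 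Combining the two parts gives $O(n\log\Delta)+\sum_v|\Psi_v|+O(n)=O(n\log\Delta)$, as claimed.

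The main obstacle I anticipate is the send-once justification. Read literally, the pseudocode has every tree node rebroadcast to $\Psi_v$ in each of the $O(\log\Delta)$ while-iterations (Lemma~\ref{lem: time}), which would naively inflate the bound to $O(n\log^2\Delta)$. The delicate step is therefore to confirm that a node gains nothing by re-sending --- each neighbour in $\Psi_v$ that lies outside $T$ already receives $v$'s ID on the first transmission --- so that a single round of invitations per node is both correct and sufficient. Establishing this is exactly what keeps the total at $O(n\log\Delta)$ rather than $O(n\log^2\Delta)$.
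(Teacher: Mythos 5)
Your proof is correct and follows essentially the same route as the paper's: charge the while-loop messages to the $\Psi_v$ edges and bound $\sum_v|\Psi_v|$ by the $O(n\log\Delta)$ message complexity of Algorithm~\ref{alg : deterministic}. You are in fact more careful than the paper's one-line argument (``communication also takes place via same edges ($\Psi_v$) for two times''), since you explicitly flag and resolve the send-once issue that, under a literal reading of the pseudocode, would inflate the bound to $O(n\log^2\Delta)$.
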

\begin{proof}
In Algorithm~\ref{alg : deterministic}, for every node $v$ communication takes place via $\Psi_v$ edges in $O(n \log \Delta)$ messages (Theorem~\ref{thm: deterministic}). In Algorithm~\ref{alg : broadcast} (from Line~\ref{line: broadcast_begin} to Line~\ref{line: broadcast_end}) communication also takes place via same edges ($\Psi_v$) for two times. Therefore, message complexity remain unchanged to $O(n \log \Delta)$.
\end{proof}

Thus, from the above discussion, we conclude the following result.
\begin{theorem}\label{thm: tree_formation}
There exists an algorithm which solve the broadcast problem in $O(n \log \Delta)$ messages and $O(\log \Delta)$ rounds which generate a tree of height $O(\log \Delta)$.
\end{theorem}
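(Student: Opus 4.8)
The plan is to prove Theorem~\ref{thm: tree_formation} by combining the three supporting lemmas already established for Algorithm~\ref{alg : broadcast}, since the theorem is essentially a packaging of those results together with a verification that the output structure is a valid broadcast tree. First I would invoke Lemma~\ref{lem: time} to guarantee that within $O(\log \Delta)$ rounds every node has joined $T$, which settles the round complexity. Next I would appeal to Lemma~\ref{lem: height} for the height bound $O(\log \Delta)$, and to Lemma~\ref{lem: broadcast_message} for the message complexity $O(n \log \Delta)$. The crux is that all three bounds are $O(\log \Delta)$ or $O(n \log \Delta)$ simultaneously, so no tension arises between them.

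Before simply citing the lemmas, I would spend a sentence confirming that the structure $T$ produced is genuinely a tree spanning all of $V$. For this I would argue two things: (i) \emph{connectivity and coverage}, namely that every node eventually joins $T$ (immediate from Lemma~\ref{lem: time}), and (ii) \emph{acyclicity}, namely that each non-root node accepts exactly one parent. The second point follows from the tie-breaking rule in the \textbf{if} block of Algorithm~\ref{alg : broadcast}: when a node $u \notin T$ receives multiple invitations it selects a single parent (the highest-priority inviter), and once $u \in T$ it no longer accepts invitations. Since $\ell$ is the unique root and every other node has exactly one parent, the resulting structure has $n-1$ edges on $n$ nodes with no cycles, hence is a spanning tree. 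The key enabler here is the auxiliary graph $G'$ from the proof of Lemma~\ref{lem: time}, whose connectivity and diameter $O(\log \Delta)$ certify that the flooding from $\ell$ along the $\Psi_v$ edges reaches everyone.

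Putting these together: run Algorithm~\ref{alg : deterministic} first to elect $\ell$ and have each node record its $\Psi_v$ set, then run the flooding while-loop of Algorithm~\ref{alg : broadcast}. The spanning-tree property is verified as above, the height is $O(\log \Delta)$ by Lemma~\ref{lem: height}, the termination occurs in $O(\log \Delta)$ rounds by Lemma~\ref{lem: time}, and the total message count is $O(n \log \Delta)$ by Lemma~\ref{lem: broadcast_message}. This yields exactly the three claimed bounds.

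I expect the main obstacle to be not the counting — which is routine once the lemmas are in hand — but justifying that reusing the $\Psi_v$ edges from the leader-election phase yields a connected, low-diameter communication skeleton. Specifically, the delicate step is the claim inside Lemma~\ref{lem: time} that the graph $G'$, built from $\ell$'s true neighborhood together with each other node's $\Psi_v$, is connected with diameter $O(\log \Delta)$; this relies on the fact that during Algorithm~\ref{alg : deterministic} the dissemination of the highest priority reached every node through precisely these edges within $O(\log \Delta)$ rounds. I would therefore make sure the proof explicitly ties the coverage of $G'$ back to the correctness argument of the leader election (every node learns of $\ell$), so that the broadcast tree inherits both its spanning property and its logarithmic depth directly from the structure the election already traversed.
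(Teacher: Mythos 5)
Your proposal matches the paper's approach exactly: the theorem is concluded directly from Lemma~\ref{lem: time} (rounds), Lemma~\ref{lem: height} (height), and Lemma~\ref{lem: broadcast_message} (messages), with the spanning-tree/acyclicity observation (each non-root node accepts a single parent) made informally in the surrounding text just as you do. Your extra sentence verifying connectivity of the auxiliary graph $G'$ is a reasonable elaboration of the paper's own argument inside Lemma~\ref{lem: time}, not a departure from it.
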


\section{Conclusion and Future Work}\label{sec: conclusion}
We studied the leader election problem in diameter-two networks. We settled all the questions raised  by Chatterjee et al. \cite{CPR20} w.r.t. deterministic setting. Various open problems come to light due to our work. These are as follows:
\begin{enumerate}
    \item We presented an $O(\log \Delta)$-round and $O(n\log \Delta)$-message complexity algorithm for the explicit leader election. An interesting question is to reduce the round complexity to $O(1)$ while keeping the message complexity $O(n\log n)$?
    
    \item Tree formed by broadcast has height $O(\log \Delta)$. An interesting question rises whether this is optimal when the message and round complexity remain unchanged or constant height is possible.
    \item Is it possible to have a randomized algorithm (with high probability) with message complexity $O(n \log n)$ and constant round complexity without the knowledge of $n$?
   \item With or without the knowledge of $n$, what would be the complexity and lower bound (in deterministic setting) in the LOCAL model where nodes can communicate with arbitrary message size in a round?
\end{enumerate}

\bibliographystyle{plainurl}
\bibliography{reference}
\end{document}